%% file: main.tex
\documentclass{ceurart}

\usepackage{amssymb}
\usepackage{graphicx}
\usepackage{enumitem}
\setlist[enumerate]{noitemsep, topsep=2pt}
\setlist[description]{noitemsep, topsep=2pt}
\usepackage{mathpazo}   % Palatino text + matched math
\usepackage{listings}   % verbatim Lean statements in the appendix
\AtBeginDocument{}

\DeclareMathOperator*{\argmin}{arg\,min}
\newcommand{\rel}[1]{\mathit{#1}}
\newcommand{\aux}{\mathrm{Aux}}
\newcommand{\Huniv}{\mathfrak{U}}
\newcommand{\Hbase}{\mathfrak{H}}
\newcommand{\Hsub}{\mathfrak{B}}
\newcommand{\Mod}{\mathfrak{M}}

\newtheorem{theorem}{Theorem}
\newtheorem{lemma}{Lemma}
\newproof{proof}{Proof}

\begin{document}

%% Rights management information (CC-BY is the CEUR-WS default).
\copyrightyear{2026}
\copyrightclause{Copyright for this paper by its authors.
  Use permitted under Creative Commons License Attribution 4.0
  International (CC BY 4.0).}

\conference{Datalog 2.0 2026: 6th International Workshop on the Resurgence of Datalog
  in Academia and Industry, September 7, 2026, Klagenfurt, Austria}

\title{Towards Datalog on Quantum Annealers: Compiling Recursive Logic
Programs with Bottom-up Semantics to 2-local Ising Models}

\author[1,2]{Bruno Rucy Carneiro Alves de Lima}[%
  email=bruno.rucy.carneiro.alves.de.lima@ut.ee,
  orcid=0000-0002-2679-6639
]
\cormark[1]
\author[1]{Victor Henrique Cabral Pinheiro}[%
  email=victor.pinheiro@ut.ee,
  orcid=0000-0003-1112-2620
]
\author[1]{Evgenii Dolzhkov}[%
  email=evgenii.dolzhkov@ut.ee,
]
\author[3]{Joseph Haske}[%
  email=joseph.l.haske@gmail.com,
]

\address[1]{University of Tartu, Tartu, Estonia}
\address[2]{Lucidarium Systems, \url{https://lucidarium.systems}}
\address[3]{Independent Researcher}

\cortext[1]{Corresponding author.}

\begin{abstract}
Quantum annealers solve problems by finding the lowest-energy (ground) state of a programmable
physical system, a 2-local Ising model, whose energy function is the Hamiltonian. We compile
recursive Datalog programs into such models so that the ground state projects onto the
program's minimal Herbrand model.
The compiler has four stages: binarization, grounding, reduction to a Min-Ones SAT formula,
and Ising encoding. Each rule becomes an energy penalty on the one assignment that violates
it, and a small uniform
cost on every true atom selects the minimal model. We contribute both in theory and in
practice with per-stage correctness lemmas and a correspondence theorem, verified in Lean~4, establishing
that the ground state of the compiled model projects onto the program's minimal Herbrand
model. We
map the compiled models onto the topologies of commercial annealers and characterize,
under classical and simulated-quantum annealing, whether and when that certified ground state is attained.
\end{abstract}

\begin{keywords}
  Datalog \sep
  Quantum Annealing \sep
  Ising Model \sep
  Logic Programming \sep
  Bottom-up Evaluation \sep
  Minimal Model
\end{keywords}

\maketitle

\section{Introduction}

A quantum annealer~\cite{kadowaki_nishimori1998,johnson2011} solves one kind of problem:
given an energy function over binary
variables, find an assignment of minimum energy. Programming
the machine means choosing that energy function, its \emph{Hamiltonian}.

Nonrecursive, query-directed (top-down) logic programs, specifically
a subset of Prolog~\cite{colmerauer1996}, have been compiled
to \emph{2-local Ising models}~\cite{ising1925}: energy functions over $\pm 1$-valued
variables, called \emph{spins}, whose
interactions are at most pairwise, the form that D-Wave's annealers
minimize~\cite{pakin2018}.

A Datalog~\cite{ceri1989datalog} program has a least model~\cite{vanEmdenKowalski1976}, and an Ising model has
least-energy (ground) states. We show that the two minima can be made to coincide: the least
model of a program becomes the ground state of a compiled Ising model. Bottom-up
evaluation stays tractable in data complexity, so the optimum of every compiled
instance is known exactly and solvers are measured against it, not against a sampler's
estimate.

We compile bottom-up \emph{recursive} Datalog to 2-local Ising
models. Our contributions are:
\begin{enumerate}
\item a compiler in which each rule becomes an energy penalty on the one assignment that
violates it, disjunction
requires no dedicated construct, and minimality follows from a single uniform cost on every true atom
\item per-stage correctness lemmas and a correspondence theorem establishing that
the ground state of the compiled model projects onto the program's minimal Herbrand model,
with proofs in Lean~4~\cite{lean4}
\item a reference implementation in Python that compiles arbitrary programs and exactly
solves instances small enough for exhaustive verification, with an empirical
evaluation of its hardware footprint and of solver recovery under
classical and simulated-quantum annealing.
\end{enumerate}

\noindent\textbf{Related work.} Before quantum hardware, propositional satisfiability was
already modelled as energy minimization: formulas map to quadratic energy functions whose minima
are the satisfying assignments~\cite{pinkas1991}, and logic
programs were encoded so that the minima are their models, with no preference among
them~\cite{wanabdullah1992}.
On annealers, Ising formulations are catalogued for many NP
problems~\cite{lucas2014}, and SAT and MaxSAT encodings target sparse
annealer hardware, with the maximum-weight assignments at minimum
energy~\cite{bian2020}. A nonrecursive fragment of Prolog compiles to Hamiltonians whose
ground states are the answers to a single query~\cite{pakin2018}. Answer Set
Programming~\cite{gelfond1988} has also been paired with
quantum computation, though by Grover search~\cite{grover1996} on gate-based machines, with
no energy
landscape involved~\cite{romanello2024}. On the classical side, computing the
minimal models of a propositional theory is a well-studied problem, with efficient algorithms
near the Horn fragment~\cite{benEliyahuDechter1996}.

In the energy encodings above, the bottom of the landscape holds a solution set. Our
propositional objective is itself standard: the compiler reduces the ground program to
Min-Ones SAT (satisfiability with the fewest true atoms), an instance of the weighted MaxSAT that the encodings
above accept~\cite{bian2020}. What separates the compiler is what sits before and after that
objective. It starts from recursive Datalog, whose rules carry variables where the encodings
above are propositional, and makes grounding an explicit, model-preserving stage, so the
ground state projects onto the query-independent least Herbrand model of the source
program. That stage is deliberately simple. The optimized grounders of Answer Set
Programming's ground-and-solve architecture would shrink it and apply here unchanged, but the
contribution is the compilation target, not the grounder. For a
definite program that model is the unique model of smallest cardinality, so one uniform
$\varepsilon$ cost on every true atom selects it, with no per-clause weights, and enforces
the founded, recursive semantics, in which every true atom rests on a derivation from the
facts. Every stage, through the ground-state
correspondence, is verified in Lean~4.

\section{Background}\label{sec:background}

\subsection{Datalog}

\noindent\textbf{Syntax.} A Datalog program $P$ consists of \emph{rules}
$h \leftarrow b_1, \ldots, b_n$ where the head $h$ and each body atom $b_i$ are
predicates applied to terms. Terms are \emph{variables} ($X, Y$) or
\emph{constants} ($a, b$). Every rule is \emph{safe}: each head variable also occurs in the
body. A rule is \emph{recursive} when its head predicate
occurs in a body, and a \emph{ground fact} is a rule with no variables and an
empty body, such as $\rel{Edge}(a,b)$. The transitive closure program used
throughout is
\[
\rel{Reach}(X,Y) \leftarrow \rel{Edge}(X,Y)
\qquad
\rel{Reach}(X,Z) \leftarrow \rel{Edge}(X,Y), \rel{Reach}(Y,Z).
\]

\noindent\textbf{Semantics.} Given an \emph{extensional database} (EDB)
$\mathcal{D}$ of input facts, the \emph{Herbrand universe} $\Huniv$ is the
set of constants in $P \cup \mathcal{D}$ and the \emph{Herbrand base}
$\Hbase$ is the set of ground atoms over $\Huniv$. A program is a set of
definite (Horn) clauses and has a unique \emph{minimal model}, the least
Herbrand model, $\Mod_P(\mathcal{D}) \subseteq \Hbase$: the smallest set such that
\begin{enumerate}
\item $\mathcal{D} \subseteq \Mod_P(\mathcal{D})$ and
\item for every rule and every substitution $\theta$ from the rule's variables to $\Huniv$, if
$\{b_1\theta, \ldots, b_n\theta\} \subseteq \Mod_P(\mathcal{D})$ then
$h\theta \in \Mod_P(\mathcal{D})$.
\end{enumerate} The minimal model is the least fixpoint of the
immediate-consequence operator $T_P$, which exists by Knaster--Tarski since $T_P$
is monotone~\cite{tarski1955}. This least-model reading of a
definite program is standard~\cite{vanEmdenKowalski1976,ahv1995foundations}.

\subsection{Bottom-up vs.\ Top-down Evaluation}

\noindent\textbf{Top-down.} Given a \emph{query} such as
$\rel{Reach}(a,d)$?, evaluation searches backward: it finds rules whose head
unifies with the query, recursively proves the body, and backtracks on failure.

\noindent\textbf{Bottom-up.} Evaluation starts from $\mathcal{D}$, applies all
rules to derive new facts, and repeats to a fixpoint. For
$\mathcal{D}_0 = \{\rel{Edge}(a,b), \rel{Edge}(b,c), \rel{Edge}(c,d)\}$:
Iteration 1 (rule 1) derives $\rel{Reach}(a,b), \rel{Reach}(b,c), \rel{Reach}(c,d)$.
Iteration 2 (rule 2) adds $\rel{Reach}(a,c), \rel{Reach}(b,d)$, iteration 3 adds
$\rel{Reach}(a,d)$, and iteration 4 adds nothing (fixpoint). $\Mod_P(\mathcal{D}_0)$
contains these $6$ facts and the $3$ edges, independently of any query.

\subsection{2-local Ising Models}
\label{sec:ising}

An Ising model over $n$ binary spins
$\sigma = (\sigma_1, \ldots, \sigma_n) \in \{-1,+1\}^n$ has Hamiltonian
\[
H(\sigma) = \sum_{i=1}^n \eta_i\, \sigma_i + \sum_{i<j} J_{ij}\, \sigma_i \sigma_j,
\]
with \emph{local fields} $\eta \in \mathbb{R}^n$ and \emph{couplings}
$J \in \mathbb{R}^{n \times n}$, realized in hardware as programmable couplers. The \emph{ground state}
$\sigma^* = \argmin_\sigma H(\sigma)$ has minimal energy. Here \emph{2-local} means that $H$ has at most
pairwise interactions, a bound on the degree of $H$. 

A constraint coupling three or more spins at once has no
direct 2-local form, so encoding it requires auxiliary variables, which
Section~\ref{sec:framework} introduces
at the rule level and at the energy level. The induced interaction graph is dense, and its
connectivity does not in general match the sparse hardware graph of bounded degree (Chimera,
Pegasus, Zephyr~\cite{boothby2020pegasus,boothby2021zephyr}), so it is realized by
\emph{minor-embedding}, each logical spin becoming a \emph{chain} of physical qubits coupled
to act as one~\cite{choi2008}. We use the
spin encoding $\sigma_f = +1$ for \emph{true} and $\sigma_f = -1$ for
\emph{false}.

\section{Compilation Framework}\label{sec:framework}

We realize our compiler as four typed stages, each a transformation between
categories of decreasing abstraction:
\[
\mathcal{P} \xrightarrow{\ \mathcal{C}_1\ } \mathcal{P}^{\mathrm{bin}}
\xrightarrow{\ \mathcal{C}_2\ } \mathcal{P}^{\mathrm{gnd}}
\xrightarrow{\ \mathcal{C}_3\ } \Phi
\xrightarrow{\ \mathcal{C}_4\ } (\eta, J).
\]
A program $P \in \mathcal{P}$ binarizes to a program $\mathcal{C}_1(P) \in
\mathcal{P}^{\mathrm{bin}}$ whose rule bodies hold at most two atoms. Grounding sends that to
a variable-free program in $\mathcal{P}^{\mathrm{gnd}}$, Stage~3 reads off a propositional
formula $\Phi$ in conjunctive normal form (CNF), and Stage~4 emits the local fields $\eta$ and
couplings $J$ of a 2-local Ising model. Every object denotes the same minimal Herbrand
model $\Mod_P(\mathcal{D})$, and we design each stage to preserve that denotation.

We develop the four stages in order, each with its preservation lemma, and state two
correctness theorems. We illustrate the stages with the transitive closure program and
$\mathcal{D}_0$.

\subsection{Stage 1: Binarization
($\mathcal{C}_1\colon \mathcal{P} \to \mathcal{P}^{\mathrm{bin}}$)}

This stage rewrites rules with $n > 2$ body atoms into rules with at most $2$ body
atoms by introducing fresh auxiliary predicates. For $r = (h, b_1, \ldots, b_n)$,
fold the body right-to-left:
\[
\begin{aligned}
\mathcal{B}(r) = {}& \{(\aux_1(V_1), b_{n-1}, b_n)\} \\
&\cup\; \{(\aux_k(V_k), b_{n-k}, \aux_{k-1}(V_{k-1})) \mid 2 \le k \le n-2\} \\
&\cup\; \{(h, b_1, \aux_{n-2}(V_{n-2}))\}
\end{aligned}
\]
where $\aux_k$ summarises $b_{n-k} \wedge \cdots \wedge b_n$ and carries exactly
the still-needed variables, with $\mathit{vars}(\cdot)$ the set of variables occurring in its
argument:
\[
V_k = \mathit{vars}(\{b_{n-k}, \ldots, b_n\}) \cap
\big(\mathit{vars}(h) \cup \mathit{vars}(\{b_1, \ldots, b_{n-k-1}\})\big),
\]
preserving variable dependencies and rule safety (every head variable still occurs in the body).
Each step of the fold is a \emph{split}: it replaces the two-atom tail block by one fresh
auxiliary atom, so a rule with $n$ body atoms takes $n - 2$ splits.

\noindent\textbf{Example.} Both rules are already binary, so $\mathcal{C}_1(P_0) = P_0$,
and the two splits of
$\rel{Q}(X,W) \leftarrow \rel{S}(X,Y), \rel{T}(Y,Z), \rel{U}(Z,W), \rel{V}(W)$ give
{\footnotesize
\[
\aux_1(Z,W) \leftarrow \rel{U}(Z,W), \rel{V}(W)
\qquad \aux_2(Y,W) \leftarrow \rel{T}(Y,Z), \aux_1(Z,W)
\qquad \rel{Q}(X,W) \leftarrow \rel{S}(X,Y), \aux_2(Y,W).
\]}

\begin{lemma}[Binarization Correctness]\label{lem:bin}
For any program $P$ and EDB $\mathcal{D}$ whose own predicates avoid the auxiliary names,
given one fresh auxiliary predicate
name per split, restricted to the non-auxiliary
predicates, $\Mod_P(\mathcal{D}) = \Mod_{\mathcal{C}_1(P)}(\mathcal{D})$.
\end{lemma}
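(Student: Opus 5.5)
We will prove the equality by induction on the number of splits, reducing it to a single-split lemma. A single split replaces a rule $r$ by two rules: $r'$, which has the tail block replaced by a fresh auxiliary atom $A(V)$, and the defining rule $A(V) \leftarrow \beta$, where $\beta$ is the replaced block. Here $V$ is the set of variables of $\beta$ that also occur in the head or in the remaining body atoms, exactly as in the definition of $V_k$. The fold $\mathcal{C}_1$ is a sequence of $n-2$ such splits per rule. After each split the new program again has $A$ fresh, and, by hypothesis, neither $P$ nor $\mathcal{D}$ mentions auxiliary names. So it suffices to show that one split preserves the least model restricted to the old predicates; chaining the equalities then gives the lemma.

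For one split, let $P'$ be the split program and write $M=\Mod_P(\mathcal{D})$ and $M'=\Mod_{P'}(\mathcal{D})$. Our main tool is the characterization of the least model recalled in Section~\ref{sec:background}: it is contained in every set that contains $\mathcal{D}$ and is closed under all rule instances.

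\emph{Direction $M \subseteq M'|_{\neg\aux}$.} We show that $M'$ restricted to non-auxiliary predicates is closed under the rules of $P$. Every rule other than $r$ is shared with $P'$. For a ground instance $r\theta$ whose body holds in $M'$, the block $\beta\theta$ holds, so the defining rule under $\theta|_{\mathit{vars}(\beta)}$ yields $A(V\theta)\in M'$. Then $r'\theta$ fires and gives $h\theta\in M'$. Restriction is harmless because $\mathcal{D}$ and the rules of $P$ mention no auxiliary atoms.

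\emph{Direction $M'|_{\neg\aux} \subseteq M$.} We exhibit a closed set for $P'$, namely
\[
N = M \cup \{A(V\theta) \mid \beta\theta \subseteq M\},
\]
and check that it is closed under the rules of $P'$. The defining rule is closed by construction, and the other rules only touch old predicates. For $r'$, suppose an instance $r'\theta$ has its body in $N$. Then $A(V\theta)\in N$ is witnessed by some $\theta'$ with $\beta\theta'\subseteq M$ and $V\theta'=V\theta$. The key step is to glue $\theta$ and $\theta'$: define $\theta''$ to agree with $\theta'$ on $\mathit{vars}(\beta)$ and with $\theta$ elsewhere. This is consistent because the only variables shared between $\beta$ and the rest of $r$ are those in $V$, where $\theta$ and $\theta'$ agree by the definition of $V_k$. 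Under $\theta''$ the whole body of $r$ lies in $M$, and $h\theta''=h\theta$ because the head variables occurring in $\beta$ lie in $V$. Closure of $M$ under $r$ then gives $h\theta\in M$. So $N$ is closed, hence $M'\subseteq N$, and restricting $N$ to old predicates gives back $M$.

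We expect the main obstacle to be this substitution-gluing step. It is the only place where the precise choice of $V_k$ matters: it must contain every variable shared between the folded block and the rest of the rule, including the head. A secondary obstacle is the bookkeeping for freshness across iterated splits. At stage $k$ the rule being split already contains $\aux_{k-1}$, so "old predicates" must be read relative to the current program, and the final restriction is to the original predicates. We also need the Herbrand universe to be unchanged, which holds because binarization introduces no constants.
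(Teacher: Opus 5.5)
Your proposal is correct and follows the paper's own route: induction over the fold with one split per step, with freshness of each auxiliary name making the split a conservative extension. Your substitution-gluing step is exactly the paper's ``existential elimination'' of the block-local variables, and the explicit closed set $N = M \cup \{A(V\theta) \mid \beta\theta \subseteq M\}$ spells out the projection-soundness argument that the paper's sketch leaves implicit.
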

\begin{proof}[sketch]
Each split replaces the current two-atom tail block by a fresh atom $\aux(V_k)$, where $V_k$ are
the block variables that also occur elsewhere in the rule or in the head, and adds the single
rule deriving $\aux(V_k)$ from that block. The proof turns on the soundness of this
projection: a grounding satisfies the original body exactly when some extension over the
block-local variables makes $\aux(V_k)$ derivable, which is their existential elimination. The
auxiliary predicate is fresh, so
the added rules derive only auxiliary atoms and the step is a conservative extension. Induction
over the fold, one split per step, carries the equality to the
fully binarized program, and the minimal model is unchanged on the non-auxiliary predicates.
\end{proof}

\subsection{Stage 2: Grounding
($\mathcal{C}_2\colon \mathcal{P}^{\mathrm{bin}} \to
\mathcal{P}^{\mathrm{gnd}}$)}

This stage instantiates variables to constants in $\Huniv$. The atoms that grounding can
ever need form the \emph{Herbrand base subset}, or \emph{relevant base},
\[
\Hsub = \mathcal{D} \cup \{\, p(\vec{c}) \mid p \text{ a head predicate of } P,\ \vec{c} \in
\Huniv^{\mathit{ar}(p)} \,\},
\]
the EDB facts together with every ground atom of an intensional (head) predicate. An
extensional predicate contributes only its facts. An intensional predicate is instantiated over
$\Huniv$, so $|\Hsub|$ is $|\Huniv|^{\Theta(\mathit{ar})}$ whether or not the atoms are
derivable, the standard combined-complexity blowup~\cite{vardi1982} that demand-driven
grounding in the magic-sets tradition~\cite{bancilhon1986} would control. For each rule $r$ we
keep the \emph{relevant}
ground instances
\[
\begin{aligned}
\mathcal{G}(r) = \{\, r\theta \mid{} & \theta\colon \mathit{vars}(r) \to \Huniv,\ \text{every
body atom of } r\theta \text{ lies in } \Hsub \text{, and the head of } r\theta \text{ is not
among} \\
& \text{its body atoms} \,\},
\end{aligned}
\]
and emit each EDB fact as a bodiless ground rule. A ground instance with a body atom outside
$\Hsub$, for instance an extensional atom that is not a fact, can never fire and is omitted.
A ground instance whose head occurs in its own body fires only when its head already holds,
so it derives nothing and is omitted.

Restricting to $\Hsub$ is lossless. The decisive fact is what can \emph{never} appear:
every atom produced during bottom-up evaluation is an EDB fact or carries an intensional head
predicate, so it lies in $\Hsub$, and no atom outside $\Hsub$ is ever derived. A ground
instance discarded by the first condition has, by construction, a body atom outside $\Hsub$,
an atom no iterate can contain, so it
never fires in either the relevant or the full grounding.

\noindent\textbf{Example.} On $\mathcal{D}_0$ the Herbrand universe is $\{a,b,c,d\}$ and $\Hsub$
holds $19$ atoms (the $3$ edges and the $16$ ground $\rel{Reach}$ atoms). Grounding then
produces $18$ ground rules, the constraints over those $19$ atoms: the $3$ EDB facts, the $3$
instances of $r_1$ (one per
edge), and the $12$ instances of $r_2$ (each edge $\rel{Edge}(x,y)$ paired with a target $z$),
among them
\[
\rel{Reach}(a,b) \leftarrow \rel{Edge}(a,b)
\qquad
\rel{Reach}(a,d) \leftarrow \rel{Edge}(a,b), \rel{Reach}(b,d).
\]
The unrestricted grounding $\{r\theta \mid \theta\colon \mathit{vars}(r) \to \Huniv\}$ would
instead list $|\Huniv|^{\#\mathit{vars}(r)}$ instances per rule, $16 + 64 + 3 = 83$ in all, almost
every one dropped by the two conditions of $\mathcal{G}(r)$.

\begin{lemma}[Grounding Correctness]\label{lem:gnd}
Grounding preserves the minimal model:
$\Mod_{\mathcal{C}_2(P^{\mathrm{bin}})}(\mathcal{D}) =
\Mod_{P^{\mathrm{bin}}}(\mathcal{D})$. Grounding introduces no auxiliary predicates,
so the equality holds on the full Herbrand base.
\end{lemma}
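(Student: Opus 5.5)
We prove the equality by double inclusion, writing $M = \Mod_{P^{\mathrm{bin}}}(\mathcal{D})$ and $G = \Mod_{\mathcal{C}_2(P^{\mathrm{bin}})}(\mathcal{D})$. Both are least models, so each inclusion follows from least-ness once we show that one set is closed under the other program's rules. The ground program's atoms are drawn from the same Herbrand base $\Hbase$, and no new predicate symbols appear. Hence once $G = M$ is established as sets of ground atoms, the equality holds on the full Herbrand base, as the second sentence of the lemma claims.

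\emph{Inclusion $G \subseteq M$.} We show $M$ is a model of $\mathcal{C}_2(P^{\mathrm{bin}})$ together with $\mathcal{D}$. The EDB facts emitted as bodiless rules lie in $\mathcal{D} \subseteq M$. Every other ground rule has the form $r\theta$ for some rule $r$ and some substitution $\theta\colon \mathit{vars}(r) \to \Huniv$. If its body lies in $M$, then clause~2 of the definition of $\Mod_{P^{\mathrm{bin}}}(\mathcal{D})$ puts its head in $M$. So $M$ is closed under the ground program, and minimality of $G$ gives $G \subseteq M$.

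\emph{Inclusion $M \subseteq G$.} We show $G$ is closed under every rule $r$ of $P^{\mathrm{bin}}$ and every substitution $\theta$. Assume the body of $r\theta$ lies in $G$; we must put the head $h\theta$ in $G$.
\begin{enumerate}
\item First we prove the invariant $G \subseteq \Hsub$. This holds because $\Hsub$ is itself a model of the ground program, so least-ness of $G$ applies. Indeed, EDB facts lie in $\mathcal{D} \subseteq \Hsub$, and every other ground rule has a head whose predicate is a head predicate of $P$, with arguments in $\Huniv$, so the head lies in $\Hsub$. This is exactly the paper's observation that no iterate leaves $\Hsub$. Safety guarantees that $\theta$ sends head variables to constants of $\Huniv$.
\item By the invariant, every body atom of $r\theta$ lies in $\Hsub$. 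So $r\theta$ fails to be in $\mathcal{G}(r)$ only if its head occurs among its body atoms.
\item In that case $h\theta$ is itself a body atom of $r\theta$, hence already in $G$.
\item Otherwise $r\theta \in \mathcal{G}(r)$ and its body lies in $G$, so $h\theta \in G$ because $G$ is a model of the ground program.
\end{enumerate}
Also $\mathcal{D} \subseteq G$ via the emitted facts. Thus $G$ satisfies both clauses of the definition of $M$, and minimality of $M$ gives $M \subseteq G$.

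The main obstacle is the invariant $G \subseteq \Hsub$ together with the bookkeeping that the Herbrand universe is the same on both sides. In the ground program the universe is computed from constants of $P \cup \mathcal{D}$, and binarization introduces no constants, so the $\Huniv$ used for grounding coincides with the one defining $M$. In a Lean-style formalization we would state $G \subseteq \Hsub$ as a separate lemma, proved by induction on the $T_P$ iterates or directly by the least-model argument above. We would then discharge the two filtering conditions of $\mathcal{G}(r)$ as the two cases in steps~2--4.
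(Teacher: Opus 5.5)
Your proof is correct, but it takes a different route from the paper's. The paper factors through the unrestricted grounding in two steps:
\begin{enumerate}
\item It notes that an interpretation models the full grounding exactly when it models $P^{\mathrm{bin}}$, so the two share their least model.
\item It argues operationally that the relevant and full groundings generate term-by-term identical bottom-up chains. Every iterate stays inside $\Hsub$, and on any interpretation $I \subseteq \Hsub$ the one-step closures $I \cup T(I)$ coincide.
\end{enumerate}

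You skip both the intermediate full grounding and the iterates. Instead you show directly that each least model is closed under the other program's rules. You obtain the containment $G \subseteq \Hsub$ from least-ness, because $\Hsub$ is closed under every ground rule. The same two filter cases appear in both proofs (a body atom outside $\Hsub$, or the head among the body atoms). In yours they are discharged as a closure check on the single set $G$, rather than as agreement of the two consequence operators on every interpretation inside $\Hsub$.

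Your argument is shorter and needs no fixpoint iteration or induction. It also matches the intersection-of-all-models definition of \texttt{minimalModel} that the Lean development uses. The paper's decomposition buys two things. It separates the standard fact that full grounding preserves models from the relevance filter. It also yields an evaluation-level statement: the relevant grounding runs the same bottom-up computation step by step, not merely reaching the same fixpoint.

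Two small precision points about your step~1. The heads land in $\Hsub$ because their predicates are head predicates of $P^{\mathrm{bin}}$, which includes the auxiliary predicates; this is the base the grounding stage actually uses. Safety is not needed there, since $\theta$ already maps all of $\mathit{vars}(r)$, head variables included, into $\Huniv$.
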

\begin{proof}[sketch]
In two steps. \emph{(i) Full grounding preserves models.} A ground instance
$r\theta$ is satisfied by an interpretation exactly when the binary rule $r$ is satisfied by that
interpretation under $\theta$, so an interpretation models the full grounding iff it models
$P^{\mathrm{bin}}$, and the two share every model, hence the least one.
\emph{(ii) Relevance loses nothing.} The relevant grounding
$\mathcal{C}_2(P^{\mathrm{bin}}) = \bigcup_r \mathcal{G}(r)$ and the full grounding generate
\emph{term-by-term identical} bottom-up chains: both stay inside $\Hsub$,
and on any interpretation $I$ contained in
$\Hsub$ their one-step closures $I \cup T(I)$ agree, writing $T$ for each grounding's
immediate-consequence operator. An instance present in the full but
absent from the relevant grounding is dropped for one of two reasons. Either a body atom lies
outside $\Hsub$, which such an interpretation cannot contain, so the instance never fires, or
the head sits among its own body atoms, so whatever the instance fires from $I$ was already in
$I$. Identical chains reach the identical least fixpoint. Composing (i) and (ii)
gives $\Mod_{\mathcal{C}_2(P^{\mathrm{bin}})}(\mathcal{D}) = \Mod_{P^{\mathrm{bin}}}(\mathcal{D})$.
No auxiliary predicates are introduced, so the equality holds on the full
Herbrand base.
\end{proof}

\subsection{Stage 3: Ground Program $\to$ Boolean Formula
($\mathcal{C}_3\colon \mathcal{P}^{\mathrm{gnd}} \to \mathrm{CNF}$)}

Input to this stage is the \emph{relevant} ground program
$\mathcal{P}^{\mathrm{gnd}} = \mathcal{C}_2(P^{\mathrm{bin}})$ of Stage~2 (the $18$-rule program
for the running example, not the unrestricted grounding), which Lemma~\ref{lem:gnd} certified has
the same minimal model as $P^{\mathrm{bin}}$. We record it as an explicit propositional formula.
Each ground rule $h \leftarrow b_1, \ldots, b_k$ becomes the Horn clause
$\neg b_1 \vee \cdots \vee \neg b_k \vee h$, a fact $h \leftarrow$ becomes the unit clause
$h$, and $\mathcal{C}_3(P)$ is their conjunction $\Phi$, a CNF over the ground atoms. Here the
meaning of the object shifts from derivation to satisfaction, as a clause carries polarities
and has no privileged head. An assignment satisfies $\Phi$ exactly when it models the ground
program, so the minimal model is the assignment that satisfies $\Phi$ while setting the
fewest atoms true. We therefore read the bottom-up answer as a \textbf{Min-Ones SAT}
instance.

Min-Ones SAT is NP-complete in general~\cite{kstw2001}, yet on Horn formulas the minimum
coincides with the unique least model, computable in linear
time~\cite{dowlinggallier1984}. What remains for
Stage~4 is to build an energy whose unique minimizer is that least model.

The lemma is well known, and we record it in the form the correspondence proof composes.

\begin{lemma}[Min-Ones SAT $=$ minimal model]\label{lem:minones}
For a ground program $P$ and EDB $\mathcal{D}$, the satisfying assignments of
$\Phi = \mathcal{C}_3(P)$ are exactly the models of $P$, and $\Mod_P(\mathcal{D})$ is the unique
Min-Ones SAT solution, since it satisfies $\Phi$ and is a proper subset of every other
satisfying assignment.
\end{lemma}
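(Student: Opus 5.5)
The plan is to prove the two claims of Lemma~\ref{lem:minones} in turn: first that satisfying assignments of $\Phi$ coincide with models of the ground program, then that the least model is the strict minimum among them. Throughout, an assignment is identified with its set $I$ of true atoms, and the EDB facts in $\mathcal{D}$ are taken to be the bodiless ground rules emitted by Stage~2, so each contributes a unit clause to $\Phi$.

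\textbf{Step 1 (clausewise equivalence).} Fix a ground rule $h \leftarrow b_1,\ldots,b_k$ and its clause $\neg b_1 \vee \cdots \vee \neg b_k \vee h$. The clause is false under $I$ exactly when every $b_i \in I$ and $h \notin I$, and this is precisely the condition under which $I$ violates the rule. For $k=0$ it reduces to the unit clause $h$ against the requirement $h \in I$. Since $\Phi$ is the conjunction of these clauses, $I \models \Phi$ iff $I$ satisfies every ground rule, that is, iff $I$ is a model of $P$ containing $\mathcal{D}$. This step is purely syntactic.

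\textbf{Step 2 (least model is contained in every model).} By the characterisation in Section~\ref{sec:background}, $\Mod_P(\mathcal{D})$ is the smallest set containing $\mathcal{D}$ and closed under the rules. Any model $I$ of $\Phi$ is such a closed set by Step~1, so $\Mod_P(\mathcal{D}) \subseteq I$. To make this concrete rather than merely citing the characterisation, I would induct on the Kleene iterates $T_P^{\,n}(\emptyset)$: each iterate lies in $I$ because $I$ is closed. The ground program is finite, since the relevant base $\Hsub$ is finite, so the chain stabilises at the least fixpoint. By Step~1, $\Mod_P(\mathcal{D})$ itself satisfies $\Phi$.

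\textbf{Step 3 (uniqueness of the Min-Ones optimum).} Let $I \models \Phi$ with $I \neq \Mod_P(\mathcal{D})$. Step~2 gives $\Mod_P(\mathcal{D}) \subseteq I$, so the inclusion is proper. Because the base is finite, proper inclusion forces $|\Mod_P(\mathcal{D})| < |I|$. Hence $\Mod_P(\mathcal{D})$ is the unique satisfying assignment of minimum cardinality.

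The main obstacle is Step~2 in the form a proof assistant would demand. One must connect the fixpoint definition of $\Mod_P(\mathcal{D})$ to the "smallest closed set" characterisation, which means showing that the least fixpoint of $I \mapsto \mathcal{D} \cup T_P(I)$ lies below every pre-fixpoint. I would take the Knaster--Tarski form directly: the least fixpoint is the meet of all pre-fixpoints, and any model of $\Phi$ is a pre-fixpoint by Step~1. This avoids the iteration argument altogether, and finiteness is then needed only for the cardinality comparison in Step~3.
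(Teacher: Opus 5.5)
Your proposal is correct and follows the paper's proof. Both argue that each Horn clause holds exactly when its rule is satisfied, that the least model is itself a model contained in every model, and that proper inclusion in a finite base gives a strictly smaller cardinality. You differ only in two details: you justify the containment by observing that every model is a pre-fixpoint (Knaster--Tarski), where the paper cites the least-model property, and you state outright that EDB facts enter $\Phi$ as unit clauses, which is the paper's Stage~2 convention.
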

\begin{proof}[sketch]
A Horn clause $\neg b_1 \vee \cdots \vee \neg b_k \vee h$ holds at $I$ exactly when
$\{b_1, \ldots, b_k\} \subseteq I$ implies $h \in I$, so $I \models \Phi$ iff $I$ models $P$.
The minimal model is itself a model and lies inside every model, by the least-model
property~\cite{vanEmdenKowalski1976}. A proper subset of a finite set has
strictly smaller cardinality, which makes $\Mod_P(\mathcal{D})$ the unique Min-Ones
solution.
\end{proof}

\subsection{Stage 4: 2-local Ising Encoding
($\mathcal{C}_4\colon \Phi \to
\mathbb{R}^{V} \times \mathbb{R}^{V \times V}$)}
\label{sec:stage4}

We turn $\Phi$ into an energy whose unique minimizer is the least model that
Stage~3 identified, and this is the main technical step of the compiler. Each clause of
$\Phi$ is the Horn clause of a ground rule $h \leftarrow b_1, \ldots, b_k$ with $k \le 2$,
that is, the implication $(b_1 \wedge \cdots \wedge b_k) \to h$. Exactly one assignment of its
atoms \emph{violates} the rule: every body atom true and the head false. The rule's
\emph{gate} is an energy term over those atoms that takes one constant value, its
\emph{floor}, on every assignment satisfying the rule, and rises by a gap $\Delta > 0$ on the
violating assignment. Three gate shapes cover $k \le 2$. Table~\ref{tab:gates} lists their
energies, and, writing $s_x = \sigma_x$ for the spin of atom $x$ and $\eta_x$ for the local
field on that spin, the coefficients realizing them are:

\begin{description}
\item[\textsc{Fact} $(h\leftarrow)$, $0$-IMPLY:] $\eta_h := -5$, energy $-5\,s_h$
($\Delta = 10$). Forces $h$ true.
\item[$1$-IMPLY $(h\leftarrow b)$:] $\eta_b := +1$, $\eta_h := -1$, $J_{b,h} := -1$,
energy $s_b - s_h - s_b s_h$ ($\Delta = 4$). Pairwise, with no
extra spin needed.
\item[$2$-IMPLY $(h\leftarrow l, r)$:] the violation $l \wedge r \wedge \neg h$ is
a degree-3 term, lowered to a quadratic form with one \emph{ancilla} spin
$z$~\cite{rosenberg1975}: $\eta_z := +1$, $\eta_h := -1$, $J_{l,r} := +1$,
$J_{z,l} = J_{z,r} = J_{z,h} := -1$ ($\Delta = 4$, with $z$ minimised out).
\end{description}

\begin{table}[t]
\centering
\caption{Gate energies before scaling by $W$, the uniform weight that scales every gate
against the cardinality field. Each gate rests at its floor on every assignment
that satisfies its rule and rises by the gap $\Delta$ on the unique violating assignment,
marked $\ast$. The $2$-IMPLY energies have the ancilla $z$ minimized out.}
\label{tab:gates}
\begin{tabular}{cc}
\hline
\multicolumn{2}{c}{\textsc{Fact} $(h\leftarrow)$} \\
$h$ & $E$ \\
\hline
\textsc{f} & $+5\;\ast$ \\
\textsc{t} & $-5$ \\
\hline
\end{tabular}
\qquad
\begin{tabular}{ccc}
\hline
\multicolumn{3}{c}{$1$-IMPLY $(h\leftarrow b)$} \\
$b$ & $h$ & $E$ \\
\hline
\textsc{f} & \textsc{f} & $-1$ \\
\textsc{f} & \textsc{t} & $-1$ \\
\textsc{t} & \textsc{f} & $+3\;\ast$ \\
\textsc{t} & \textsc{t} & $-1$ \\
\hline
\end{tabular}
\qquad
\begin{tabular}{cccc}
\hline
\multicolumn{4}{c}{$2$-IMPLY $(h\leftarrow l, r)$} \\
$l$ & $r$ & $h$ & $E$ \\
\hline
\textsc{f} & \textsc{f} & \textsc{f} & $-2$ \\
\textsc{f} & \textsc{f} & \textsc{t} & $-2$ \\
\textsc{f} & \textsc{t} & \textsc{f} & $-2$ \\
\textsc{f} & \textsc{t} & \textsc{t} & $-2$ \\
\textsc{t} & \textsc{f} & \textsc{f} & $-2$ \\
\textsc{t} & \textsc{f} & \textsc{t} & $-2$ \\
\textsc{t} & \textsc{t} & \textsc{f} & $+2\;\ast$ \\
\textsc{t} & \textsc{t} & \textsc{t} & $-2$ \\
\hline
\end{tabular}
\end{table}

\noindent\textbf{Disjunction.} \looseness=-1 A head derivable by several rules contributes one gate per
rule on the shared head spin. Because $(a \to h) \wedge (b \to h)$ equals $(a \vee b) \to h$,
the encoding expresses disjunction without a dedicated construct.

\noindent\textbf{Minimality.} An implication gate binds the head only when the rule's body
holds, so a head with no satisfied body stays otherwise free. We add a uniform field
$\eta_f \mathrel{+}= \varepsilon/2$ with $\varepsilon > 0$ on every atom spin. A fact atom is
pinned by its gate, so its field term is a constant shift that moves no ground state. The
reference compiler implements exactly this energy.

As spins are $\pm 1$, a field $\eta_f s_f$ swings by $2\eta_f$ between true and false, and $\eta_f =
\varepsilon/2$ makes each true atom cost $\varepsilon$ relative to false. Among the models of
$P$, the objective becomes $\varepsilon$ times the number of true atoms. The minimal model is a proper subset of every other model (Lemma~\ref{lem:minones}),
so it alone has the fewest true atoms and the field selects it.

\noindent\textbf{Precision.} Accumulated field offsets must stay below the penalty of any single
violation, so we scale the gates by a weight $W$ and keep
$\varepsilon\, N < W$ for some $N$ bounding the minimal model's size. Each gate's gap is at
least $\Delta_{\min} = 4$ (Table~\ref{tab:gates}), so a violation costs at least $4W$. Since
$\Mod_P(\mathcal{D}) \subseteq \Hsub$, the static choice $N = |\Hsub|$ works and is known
before any evaluation. The compiler tightens it to $N = n_{\mathrm{atoms}}$, the atom-spin
count, which still bounds the minimal model because every minimal-model atom occurs in the
compiled program, and deploys $\varepsilon = 1$ with $W = n_{\mathrm{atoms}} + 1$, so
$N < W$ holds by construction. On every family evaluated here
$n_{\mathrm{atoms}} = |\Hsub|$. Relative to the gate
scale, the selecting field is then $\varepsilon/(W\,\Delta_{\min}) =
\Theta(1/n_{\mathrm{atoms}})$: it
shrinks as the atom count grows, and the evaluation quantifies what that costs on hardware.

\subsection{Correctness}

\begin{theorem}[Ising Encoding, $\varepsilon$-minimality]\label{thm:ising}
Let $H$ be the Hamiltonian of Stage~4 for a ground program $P$ at the deployed
$\varepsilon = 1$, under
$N < W$ for an integer weight $W$ and some $N \ge |\Mod_P(\mathcal{D})|$. Then $H$ has a
unique minimizing
\emph{true-set} (set of atoms assigned true), and the atom-projection of every ground state
equals the minimal model:
$\sigma_f = +1 \iff f \in \Mod_P(\mathcal{D})$.
\end{theorem}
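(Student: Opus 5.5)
We reduce the Hamiltonian to a function of the true-set alone and compare energies. Let $A$ be the atom spins, $Z$ the ancillas (one per $2$-IMPLY gate), and $T \subseteq A$ the true-set of an atom assignment. Write
\[
H(\sigma) = W \sum_{g} E_g(\sigma) + \tfrac{\varepsilon}{2}\sum_{f \in A} \sigma_f ,
\]
with the fact gates absorbed into the sum over $g$. Each ancilla appears only in its own gate, so minimizing $H$ over $Z$ with $T$ fixed splits gate by gate. This yields the reduced energy $\widehat H(T) = W\sum_g \widehat E_g(T) + \tfrac{\varepsilon}{2}(2|T| - |A|)$, where $\widehat E_g$ is the minimized-out energy of Table~\ref{tab:gates}. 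The first step is to check the table entries once per gate shape: fields $\pm 5$ for a fact, the four $1$-IMPLY values, and the eight $2$-IMPLY values after minimizing over $z \in \{\pm1\}$. This gives $\widehat E_g = \phi_g + \Delta_g\,[\text{$g$ violated by }T]$, with floor $\phi_g$ and $\Delta_g \ge 4$. Hence
\[
\widehat H(T) = C + W\sum_g \Delta_g\,[g \text{ violated}] + \varepsilon |T|,
\]
where the constant $C$ does not depend on $T$. A state $(\sigma_A,\sigma_Z)$ is a ground state of $H$ only if $\sigma_Z$ minimizes its gates and $T$ minimizes $\widehat H$. So it suffices to show that $\widehat H$ has the unique minimizer $M = \Mod_P(\mathcal{D})$.

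Next, the satisfying case. If $T$ violates no gate, then $T$ models every ground rule, and every fact is true. By Lemma~\ref{lem:minones}, $T$ is a model of $P$, i.e.\ satisfies $\Phi$, so $M \subseteq T$, and $M \subsetneq T$ whenever $T \neq M$. In that case $\widehat H(T) - \widehat H(M) = \varepsilon(|T| - |M|) \ge \varepsilon > 0$. The minimal model itself violates no gate, so $\widehat H(M) = C + \varepsilon|M|$.

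The remaining case, which is the main obstacle, is a violating $T$. Here we must show $\widehat H(T) > \widehat H(M)$, even though $T$ may contain far fewer atoms than $M$. We have
\[
\widehat H(T) - \widehat H(M) \ge 4W + \varepsilon|T| - \varepsilon|M| \ge 4W - \varepsilon N.
\]
With $\varepsilon = 1$ and $N < W$, this is at least $4W - N > 3W > 0$. The crude bound suffices because the cardinality term of $T$ is nonnegative and that of $M$ is at most $\varepsilon N$. Integrality of $W$ and $N$ is not needed for strictness, though it is harmless.

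Combining the two cases, $M$ is the unique minimizer of $\widehat H$. Therefore every ground state has atom-projection $M$, which means $\sigma_f = +1 \iff f \in M$, and the minimizing true-set is unique. Ancilla values may be non-unique, for instance when a gate's minimum over $z$ is attained twice. That is why the statement speaks of true-sets rather than of a unique ground state. The only genuinely computational part is verifying the floor-plus-gap form of the $2$-IMPLY gate with $z$ minimized out. I would do this by enumerating the $16$ spin configurations, or by rewriting the gate as $\min_z$ of a quadratic pseudo-Boolean form in the style of Rosenberg's reduction.
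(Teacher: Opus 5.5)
Your proof is correct and takes essentially the same route as the paper's. Like the paper, you minimize out the private ancillas gate by gate to get a per-true-set cost of the form constant $+\,W\cdot$(gaps of violated gates)$\,+\,\varepsilon|T|$, then separate models, where the proper-subset property of Lemma~\ref{lem:minones} gives strict cardinality dominance, from non-models, where a single violation outweighs $\varepsilon N$. The only difference is cosmetic: you keep the full $4W$ rise where the paper's sketch uses the weaker bound $W$.
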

\begin{proof}[sketch]
The model carries one atom spin per ground atom of $P$ and one private ancilla per
$2$-IMPLY gate.
For a true-set $I$ write $\mathrm{cost}(I) = \min_{z} H(\sigma_I, z)$ for the energy
with the atom spins fixed by $I$ and the ancillas minimized out, and let $c_0$ collect the
per-gate floors and the constant part $-n_{\mathrm{atoms}}/2$ of the uniform
field. The constant $c_0$ is independent of $I$, shifts every energy equally, and so
does not affect the argmin. No gate is violated exactly when $I$ models $P$, by
Lemma~\ref{lem:minones}, in which case every gate rests at its floor and
$\mathrm{cost}(I) = c_0 + |I|$. When $I$ fails to model $P$, some gate is violated
and $\mathrm{cost}(I) \ge c_0 + W$ (the physical rise is
at least $W\,\Delta_{\min}$). The minimal model then sits strictly below every non-model,
$\mathrm{cost}(\Mod_P(\mathcal{D})) = c_0 + |\Mod_P(\mathcal{D})| \le c_0 +
N < c_0 + W \le \mathrm{cost}(I)$, so no non-model can be a minimizer.

Among the models, the minimal model
is a proper subset of every other, so its cardinality term is strictly smaller and it is the
unique minimizing true-set. Each $2$-IMPLY ancilla takes whichever value minimizes its own gate.
On a gate whose head is true with exactly one true body atom the two values tie, so the full
configuration need not be unique, though the atom-projection is.
\end{proof}

\noindent\textbf{Foundedness on a cycle.} The uniform field is what rejects self-supporting
derivations, a behavior the acyclic running example cannot exhibit. Take the ground program
$p \leftarrow q$, $q \leftarrow p$ with empty EDB. It has two models, $\emptyset$ and
$\{p, q\}$. The second satisfies both rules, since each atom supports the other, yet it is
unfounded, resting on nothing outside the cycle, and the least model is $\emptyset$. The compiled
Hamiltonian, with the deployed $\varepsilon = 1$ and $W = N+1 = 3$, assigns energy $-7$ to
$\emptyset$ and $-5$ to $\{p, q\}$, so its unique ground state is $\emptyset$, ahead of
$\{p, q\}$ by exactly $2\varepsilon$, the
cardinality gap times the field (verified with \texttt{dimod.ExactSolver}).

\looseness=-1 The Clark completion~\cite{clark1978} reads each rule as a biconditional, here
$p \leftrightarrow q$, and still admits both models. The $\varepsilon$ field selects the
founded model directly, with no completion or
loop formulas~\cite{linzhao2004}. For a definite program the minimal model, the well-founded
model~\cite{vangelder1991}, and the unique stable model~\cite{gelfond1988} coincide, so on this
fragment the field reproduces those semantics rather than extending them.

Composing this encoding theorem with the three preservation lemmas carries the guarantee back
to the source program.

\begin{theorem}[Correspondence]\label{thm:correspondence}
Write $\mathcal{I} = \mathcal{C}_4 \circ \mathcal{C}_3 \circ \mathcal{C}_2 \circ
\mathcal{C}_1$. For every program $P$ and EDB $\mathcal{D}$ whose Herbrand universe is
nonempty, with the fresh-name supply of
Lemma~\ref{lem:bin}, if $\sigma$ is a ground state of
the Ising model $\mathcal{I}(P, \mathcal{D})$, then for every non-auxiliary ground atom $f$,
$\sigma_f = +1 \iff f \in \Mod_P(\mathcal{D})$. Theorem~\ref{thm:ising} applied to the
binarized program fixes the auxiliary atoms as well, each to its value in the least model of
the ground binarized program $\mathcal{C}_2(\mathcal{C}_1(P))$. Only the $2$-IMPLY ancilla spins are unconstrained off this
projection: a gate whose head is true with exactly one true body atom has two tying ancilla
values, so the full ground-state configuration need not be unique.
\end{theorem}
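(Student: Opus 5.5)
The proof composes the three preservation lemmas with Theorem~\ref{thm:ising}, chaining equalities of minimal models from the ground state back to the source program. Write $P^{\mathrm{bin}} = \mathcal{C}_1(P)$ and $P^{\mathrm{gnd}} = \mathcal{C}_2(P^{\mathrm{bin}})$, so that $\mathcal{I}(P,\mathcal{D}) = \mathcal{C}_4(\mathcal{C}_3(P^{\mathrm{gnd}}))$ is exactly the Stage~4 Hamiltonian for the ground program $P^{\mathrm{gnd}}$.

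\emph{Step one} checks the hypotheses of Theorem~\ref{thm:ising} for $P^{\mathrm{gnd}}$. The deployed choice is $\varepsilon = 1$, $N = n_{\mathrm{atoms}}$ and $W = n_{\mathrm{atoms}}+1$, so $N < W$ holds by construction. It remains to show $N \ge |\Mod_{P^{\mathrm{gnd}}}(\mathcal{D})|$. Every atom of the least model is either an EDB fact, emitted as a bodiless rule, or the head of some rule instance in $P^{\mathrm{gnd}}$ that fires. Either way it occurs in the compiled program and therefore carries an atom spin. Nonemptiness of $\Huniv$ is used here and in grounding, so that $\Hsub$ and the instantiation over $\Huniv$ are well defined. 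In the Lean development this is presumably a side condition of the grounding lemma.

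\emph{Step two} applies Theorem~\ref{thm:ising}. For every ground state $\sigma$ and every atom $f$ of $P^{\mathrm{gnd}}$, it gives $\sigma_f = +1 \iff f \in \Mod_{P^{\mathrm{gnd}}}(\mathcal{D})$. This includes the auxiliary atoms, which yields the second sentence of the statement directly.

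\emph{Step three} transfers this back to the source program. Lemma~\ref{lem:gnd} gives $\Mod_{P^{\mathrm{gnd}}}(\mathcal{D}) = \Mod_{P^{\mathrm{bin}}}(\mathcal{D})$ on the full Herbrand base. Lemma~\ref{lem:bin}, applied with the assumed fresh-name supply, gives equality with $\Mod_P(\mathcal{D})$ on non-auxiliary atoms. Composing these yields $\sigma_f = +1 \iff f \in \Mod_P(\mathcal{D})$ for every non-auxiliary $f$ that has a spin.

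The step I expect to be the main obstacle is the bookkeeping for non-auxiliary ground atoms that have no spin at all, such as atoms outside $\Hsub$ or atoms never mentioned in $P^{\mathrm{gnd}}$. For these, the notation $\sigma_f$ must be read as $-1$ by convention, and the proof must show $f \notin \Mod_P(\mathcal{D})$. This follows from the same observation as in the grounding lemma: every derived atom is an EDB fact or the head of a firing relevant instance, so it appears in the compiled program. I would first try to state this invariant as a single lemma, namely $\Mod_{P^{\mathrm{gnd}}}(\mathcal{D}) \subseteq \mathrm{atoms}(P^{\mathrm{gnd}})$, and prove it by induction on the bottom-up chain. That one lemma then serves both here and in Step one.

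Finally, the remark about ancillas is inherited verbatim from the tie analysis at the end of the proof of Theorem~\ref{thm:ising}, with no further argument needed.
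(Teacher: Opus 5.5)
Your proposal is correct and is essentially the paper's proof: check the weight bound, apply Theorem~\ref{thm:ising} to the ground binarized program $\mathcal{C}_2(\mathcal{C}_1(P))$ (which also gives the auxiliary-atom and ancilla clauses), and pull the projection back through Lemmas~\ref{lem:gnd} and~\ref{lem:bin}; your bound $N = n_{\mathrm{atoms}}$ and the paper's $N = |\Hsub^{\mathrm{bin}}|$ are both valid. The one step the paper makes explicit that you leave implicit is EDB absorption (with $\mathcal{D}$ loaded as bodiless rules, the least model of the extended program over the empty EDB is exactly $\Mod_P(\mathcal{D})$), and that step, not grounding as you guessed, is where the paper consumes the nonempty-universe hypothesis.
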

\begin{proof}[sketch]
\looseness=-1 The pipeline carries $\mathcal{D}$ inside the program as bodiless ground rules (the
Stage~2 convention), and a verified absorption lemma shows this lossless: the extended
program's least model over the empty EDB is exactly $\Mod_P(\mathcal{D})$. The deployed
weight $W = |\Hsub^{\mathrm{bin}}| + 1$, computed on the binarized ground program
($\Hsub^{\mathrm{bin}} = \Hsub$ whenever no rule needs binarization), meets the bound of
Theorem~\ref{thm:ising} with
$N = |\Hsub^{\mathrm{bin}}|$, so the theorem applies. The atom-projection of every ground state of
$\mathcal{I}(P)$ is then the minimal model of the ground program
$\mathcal{C}_2(\mathcal{C}_1(P))$, the Min-Ones SAT solution of Lemma~\ref{lem:minones}. That
model pins every ground atom of the binarized program, auxiliary atoms included, which is the
auxiliary-atom clause. Lemma~\ref{lem:gnd} and Lemma~\ref{lem:bin} carry it back through
grounding and binarization, so on the non-auxiliary predicates it is $\Mod_P(\mathcal{D})$.
\end{proof}

\noindent\textbf{Verification.} Table~\ref{tab:lean} maps the results of
this paper to their Lean identifiers, and the artifact (Lean development, notebooks, and
generated data) is hosted at
\url{https://github.com/brurucy/quantum-datalog}.\footnote{The repository will be made
public upon acceptance. A permanent DOI will accompany the camera-ready version.}

\begin{table}
\centering
\caption{Paper results and their Lean identifiers in the verified development.
Appendix~\ref{app:lean} reproduces the statements verbatim.}
\label{tab:lean}
\begin{tabular}{ll}
\hline
Result & Lean identifier \\
\hline
Lemma~\ref{lem:bin} (Binarization) & \hyperref[app:c1-correct]{\texttt{C1.correct}} \\
Lemma~\ref{lem:gnd} (Grounding) & \hyperref[app:c2-correct]{\texttt{C2.correct}} \\
Lemma~\ref{lem:minones} (Min-Ones SAT) & \hyperref[app:c3-sat]{\texttt{C3.sat}},
\hyperref[app:c3-minones]{\texttt{C3.minOnes}} \\
Theorem~\ref{thm:ising} ($\varepsilon$-minimality) & \hyperref[app:c4-encode]{\texttt{C4.encode}} \\
Theorem~\ref{thm:correspondence} (Correspondence) & \hyperref[app:c4-pipeline-correct-edb]{\texttt{C4.pipeline\_correct\_edb}} \\
$2$-locality of $H$ ($(\eta,J)$ extraction) & \hyperref[app:c4-twolocal]{\texttt{C4.twolocal}} \\
\hline
\end{tabular}
\end{table}

\section{Evaluation and Limitations}

We size the datasets and programs to what could fit a commercially available annealer as of
writing. The target is the hardware class, 2-local Ising machines. Concrete device parameters
come from D-Wave, whose Pegasus
(Advantage) and Zephyr (Advantage2) topologies have degree $15$ and $20$ and a few thousand
qubits~\cite{boothby2020pegasus,boothby2021zephyr}.
Correctness is settled by Theorem~\ref{thm:correspondence}, so the question here is whether an
instance of that scale could run. Table~\ref{tab:programs} defines every evaluated program.
$\rel{tc\_path}$ is the running example's program over a path EDB and $\rel{tc\_cycle}$ is the
same program over a cycle. $\rel{nonlinear\_tc\_path}$ replaces the recursive rule by one that
joins two derived atoms, so $\rel{Reach}$ atoms can support one another. The non-recursive
$\rel{linear\_chain}$ is a sparse contrast, and $\rel{reach}$ computes single-source
reachability over a hardware graph given as data, the family of the chip-scale experiments
below.

\begin{table}[t]
\centering
\caption{The evaluated programs. $n$ is the size parameter, and $\rel{reach}$ takes a
hardware graph $G = (V, E)$ and a seed node as data.}
\label{tab:programs}
\footnotesize
\begin{tabular}{lll}
\hline
Family & Rules & EDB $\mathcal{D}$ \\
\hline
$\rel{tc\_path}(n)$ &
\begin{tabular}[t]{@{}l@{}}
$\rel{Reach}(X,Y) \leftarrow \rel{Edge}(X,Y)$ \\
$\rel{Reach}(X,Z) \leftarrow \rel{Edge}(X,Y), \rel{Reach}(Y,Z)$
\end{tabular} &
$\rel{Edge}(i, i{+}1)$ for $0 \le i < n{-}1$ \\
\addlinespace
$\rel{tc\_cycle}(n)$ & as $\rel{tc\_path}$ &
$\rel{Edge}(i, (i{+}1) \bmod n)$ for $0 \le i < n$ \\
\addlinespace
$\rel{nonlinear\_tc\_path}(n)$ &
\begin{tabular}[t]{@{}l@{}}
$\rel{Reach}(X,Y) \leftarrow \rel{Edge}(X,Y)$ \\
$\rel{Reach}(X,Z) \leftarrow \rel{Reach}(X,Y), \rel{Reach}(Y,Z)$
\end{tabular} &
as $\rel{tc\_path}$ \\
\addlinespace
$\rel{linear\_chain}(n)$ &
$\rel{P}_i(X) \leftarrow \rel{P}_{i-1}(X)$ for $1 \le i < n$ &
$\rel{P}_0(c)$ \\
\addlinespace
$\rel{reach}(G)$ &
\begin{tabular}[t]{@{}l@{}}
$\rel{Reach}(X) \leftarrow \rel{Source}(X)$ \\
$\rel{Reach}(Y) \leftarrow \rel{Reach}(X), \rel{Edge}(X,Y)$ \\
$\rel{Reach}(X) \leftarrow \rel{Reach}(Y), \rel{Edge}(X,Y)$
\end{tabular} &
$\rel{Edge}(u,v)$ for $\{u,v\} \in E$, one $\rel{Source}$ \\
\hline
\end{tabular}
\end{table}

Instances are compiled with the reference compiler at the certified encoding's $\varepsilon = 1$
and $W = n_{\mathrm{atoms}} + 1$. Soundness needs only $|\Mod_P(\mathcal{D})| < W$
(Theorem~\ref{thm:ising}), met at every size. Every rule in Table~\ref{tab:programs} carries
at most two body atoms, so Stage~1's binarization never fires in these experiments, and the
auxiliary-atom path is exercised by the verified witness module instead. Experiments use \texttt{dwave-samplers} and
\texttt{minorminer}, with no QPU involved.

\noindent\textbf{The energy landscape.} \looseness=-1 Because bottom-up evaluation is tractable we know the
minimal model, and this instance is small enough to know everything else too. The program is
$\rel{nonlinear\_tc\_path}(4)$, whose recursive rule joins two derived atoms, over the path
$0 \to 1 \to 2 \to 3$, with $19$ ground atoms, so there are $2^{19} = 524{,}288$ atom
assignments. We compute the exact energy of every one of them, with each gate's ancilla at its
optimal value. The enumeration certifies that the minimal model is the unique global minimum
and that exactly $18$ assignments are single-flip local minima.

Figure~\ref{fig:landscape} walks through this space. Each position on the horizontal axis is
one assignment, consecutive assignments differ in exactly one atom, and the tick label names
the atom added at that step, with the curve interpolating between the plotted states. The
vertical axis is the assignment's energy. The walk starts at $\mathcal{D}$, the assignment
holding exactly the three $\rel{Edge}$ facts, which the $0$-IMPLY gates force into every
low-energy state. The next six steps add the derived $\rel{Reach}$ atoms in fixpoint order
through the iterates $T^1$ and $T^2$ (the first and second application of $T_P$ to
$\mathcal{D}$), arriving at the minimal model, the nine-atom assignment
at energy $-1{,}800.5$, the global minimum. The six steps after it add $\rel{Reach}(1,1)$,
$\rel{Reach}(2,1)$, $\rel{Reach}(2,2)$, $\rel{Reach}(3,1)$, $\rel{Reach}(3,2)$, and
$\rel{Reach}(3,3)$, an unfounded set whose atoms derive one another circularly. The resulting
assignment is a genuine local minimum a mere $6$ above the global one, one $\varepsilon$ per
unfounded atom, while leaving it costs at least $79$, since flipping any of the six unfounded
atoms back to false
violates a gate scaled to $W\Delta_{\min} = 80$. The humps between the two wells are the
partial states where a rule body is already satisfied but its head not yet added. These
foundedness traps are what energy minimization must avoid, and they set the recovery boundary
of the recursive family measured below.

\begin{figure}[htbp!]
\centering
\includegraphics[width=0.74\linewidth]{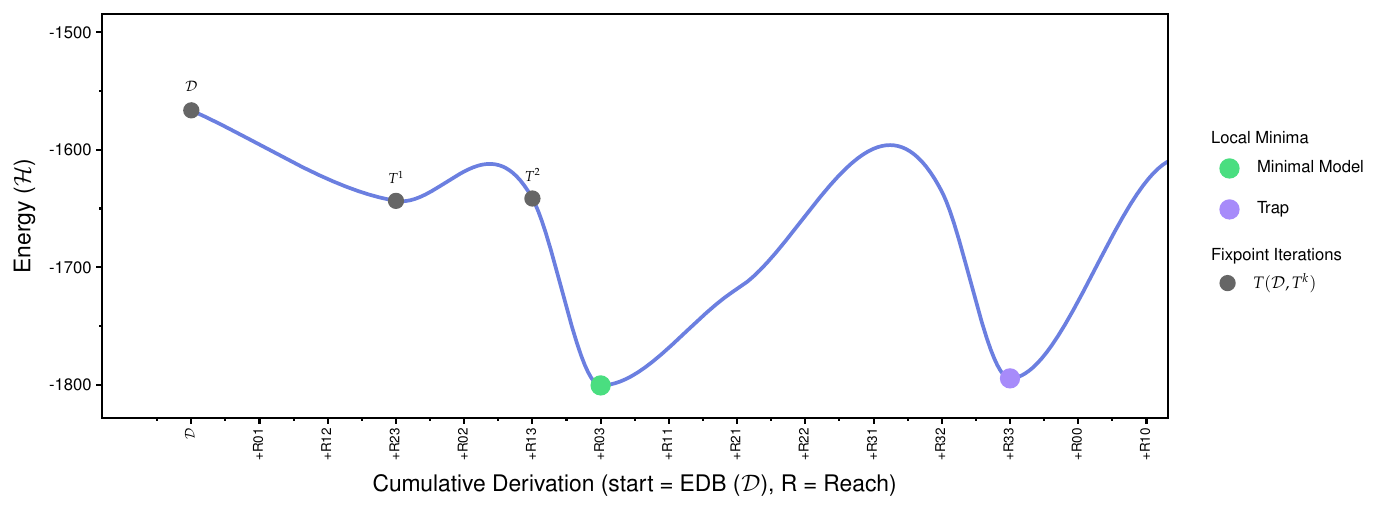}
\caption{Energy along a one-atom-per-step walk for $\rel{nonlinear\_tc\_path}(4)$: from the
EDB $\mathcal{D}$ through the fixpoint iterates $T^1, T^2$ (grey) to the minimal model
(green), the global minimum of all $2^{19}$ assignments by exhaustive enumeration. Adding the
six-atom unfounded set (purple) costs one $\varepsilon$ per atom, and leaving that local
minimum costs at least $79$ ($W\Delta_{\min} - \varepsilon$ on the gate scale).}
\label{fig:landscape}
\end{figure}

\noindent\textbf{Exact correspondence.} On the running example, the path
$a \to b \to c \to d$, the compiled model has $31$ spins, of which $19$ are atom
spins ($9$ true in the minimal model) and $12$ are ancillas, one per $2$-IMPLY
gate, with $51$ couplers. An exact solver's ground state projects exactly onto the
$9$-atom minimal model, and the cyclic, disjunctive, and multi-derivation cases are
instances of Theorem~\ref{thm:correspondence}'s guarantee.

\noindent\textbf{Logical to physical qubits.} The compiled model's logical interaction graph
must be minor-embedded into the annealer's sparse hardware graph
(Section~\ref{sec:ising}). We embed the compiled $\rel{tc\_path}$ family with
\texttt{minorminer} onto the Pegasus and Zephyr generators of \texttt{dwave\_networkx}.
Embedding is randomized, so we report medians over $15$ \texttt{minorminer} seeds
($0$--$14$, an embedding attempt being far cheaper than a recovery run). The running
example's $31$
logical spins become a median of $34$ physical qubits (range $33$--$36$), a $1.1\times$ overhead
with longest chain $2$.

Overhead climbs with size as the recursive join drives the logical
degree to $O(n)$, which passes Pegasus's $15$ at $n=8$ and Zephyr's $20$ at $n=10$. On the
Pegasus target, by $n=9$ the
$161$-spin model needs a median of $221$ qubits (range $201$--$253$), with a longest chain of
$7$ (range $5$--$9$). Longer chains also spend part of the energy scale on chain strength (the
coupling that holds a chain's qubits aligned), compounding the precision caveat below.

\noindent\textbf{Saturating the shipped machine.} Whole-chip topologies are the substrate of
D-Wave's beyond-classical simulations~\cite{king2025}, and a topology is itself a graph, so
it hosts a natural program: $\rel{reach}$ of
Table~\ref{tab:programs}, single-source reachability over the chip's own graph. The mirrored
third rule traverses stored edges backward, keeping one $\rel{Edge}$ fact per undirected
link. Because $\rel{Reach}$ is unary, the compiled model scales with the data:
$V + 3E + 1$ spins for a graph with $V$ nodes and $E$ edges, not $|\Huniv|^2$. The instance
whose EDB is the Zephyr Z1 graph compiles to $889$ logical spins and embeds into the
ideal Z12 generator ($4{,}800$ sites), the topology class of the shipped Advantage2, on
$1{,}696$ sites, with chains up to $26$ on the high-degree $\rel{Reach}$ hubs. Larger
single instances are out of reach. Z2 compiles to $3{,}833$ spins, above the machine at the
observed
$2\times$ embedding overhead, and Z3 exceeds it by spin count alone. The machine saturates
with independent programs instead: Figure~\ref{fig:embed} shows two separately compiled Z1
instances embedded side by side on $3{,}489$ sites. Compiled separately, each keeps its own
weight $W$, so the two models are independent Hamiltonians sharing only the chip, and one
anneal carries both. On a device the co-tenants would also share the single global coefficient
rescaling, harmless here since the two weights are equal.

\begin{figure}[tb]
\centering
\includegraphics[width=0.76\linewidth]{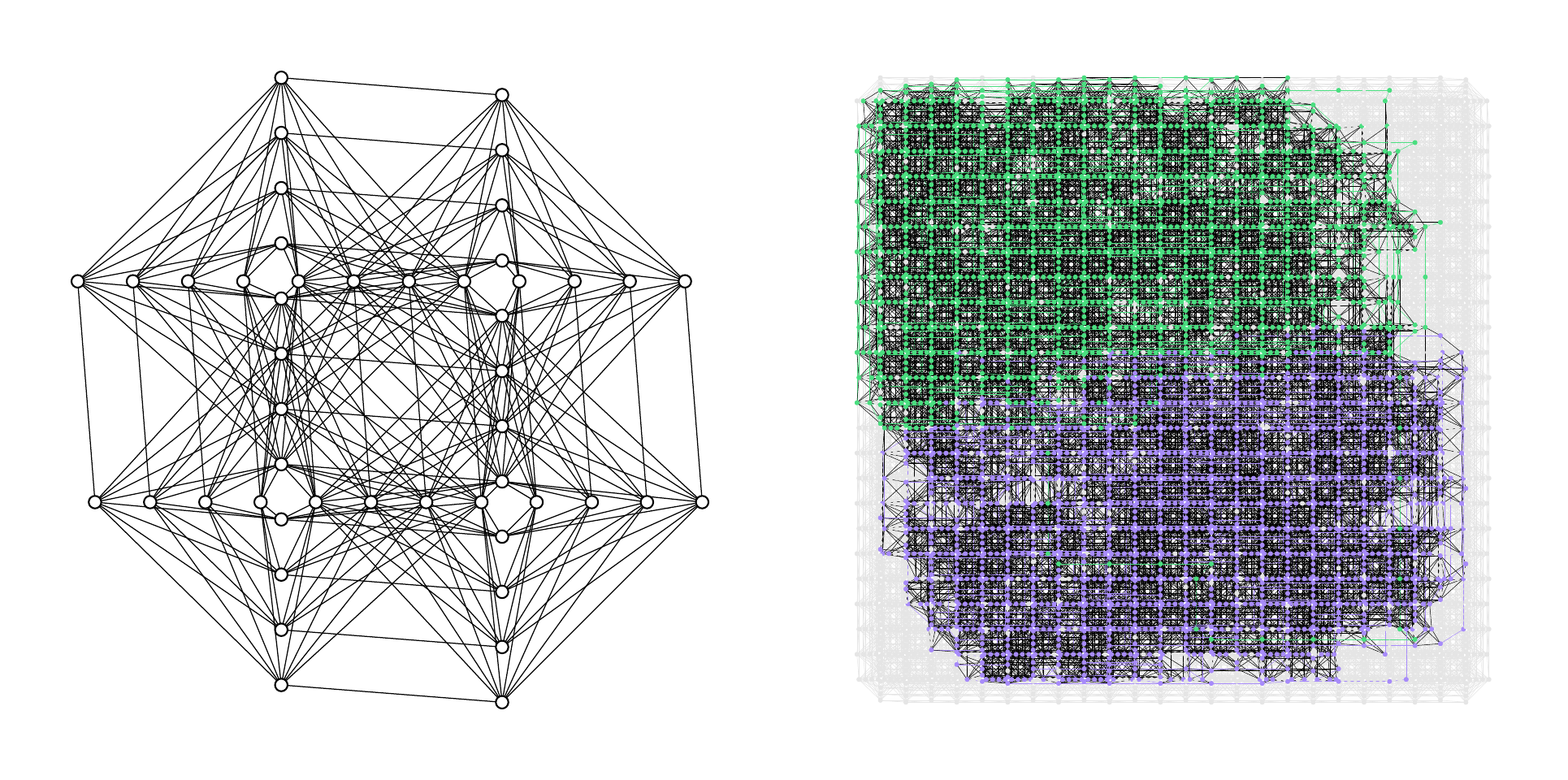}
\caption{Two independently compiled reachability programs at the shipped machine's scale.
Left: each instance's EDB, the Zephyr Z1 hardware
graph ($48$ nodes, $280$ edges). Right: both compiled programs, $1{,}778$ logical spins,
minor-embedded together into the ideal Zephyr Z12 generator ($4{,}800$ sites), occupying
$3{,}489$ physical qubits with longest chain $36$. One instance
is green, the other purple, unused qubits are greyed.}
\label{fig:embed}
\end{figure}

\noindent\textbf{Solver recovery.} Theorem~\ref{thm:correspondence} places the minimal model
at the unique ground state, and we measure whether a sampler reaches it. We compile each instance to its Ising model
and sample it with two solvers from D-Wave's SDK: classical simulated annealing (\textsc{sa},
\texttt{SimulatedAnnealingSampler}) and D-Wave's simulated quantum annealing (\textsc{sqa},
\texttt{PathIntegralAnnealingSampler}, a path-integral Monte Carlo surrogate for the device).
Both are classical simulations, so the recovery rates measured here do not bound a physical
anneal. Each runs at a budget of $200$ reads (independent anneals) of $1{,}000$ sweeps (update passes
over all spins), all other parameters at library defaults. We report the median over five seeds
($42$--$46$) with the
min--max range, all reproducible from the committed artifact. A run \emph{recovers} the model
when its lowest-energy read decodes to the bottom-up least model. The rate $p_{\text{succ}}$ is
the fraction of a run's reads that decode to it. Both solvers
carry full floating-point coefficients, isolating solver behavior from coupler
precision. 

Table~\ref{tab:recovery} reports $\rel{tc\_path}$: recovery is near-certain to $n=6$, then
collapses over $n=7\text{–}9$ ($p_{\text{succ}}$ from $0.76$ to $0.01$) and vanishes by $n=10$.
$\rel{tc\_cycle}$ breaks two steps earlier (its last recovery is at $n=7$), and the
non-recursive $\rel{linear\_chain}$ recovers
throughout the sweep to $n=30$. The non-linear $\rel{nonlinear\_tc\_path}$ collapses far sooner:
recovery is near-certain at $n=2$ but already gone by $n=4$ ($p_{\text{succ}}$
$0.97 \to 0.01 \to 0$), because its unfounded self-supporting sets are the deep
traps of Figure~\ref{fig:landscape}, basins a sampler falls into and rarely climbs out of.

At logical degree $9$ in both,
$\rel{nonlinear\_tc\_path}(3)$ with $11$ atoms recovers at $0.01$ while $\rel{tc\_path}(4)$
with $19$ atoms recovers at $0.99$, so size alone does not explain the gap. An acyclic linear
program admits no self-supporting set. Its shallow incomplete-derivation traps multiply with
$n$ and, with the climbing degree, plausibly drive $\rel{tc\_path}$'s later collapse. In these
families the wall's arrival tracks foundedness rather than coupler precision: both solvers
carry exact coefficients, yet the energy landscape hides the program's polynomial-time Horn
structure from local search, and more so as the rules admit more unfounded support.

\looseness=-1 The wall is calibrated to the budget rather than absolute: $10\times$ and $50\times$ the
sweeps lift \textsc{sa} on $\rel{tc\_path}(9)$ from $0.01$ to $0.175$ and $0.55$ (and $n=8$
to $0.985$, medians over the same five seeds), so the
crossing points shift with effort while the ordering in size and family persists.
\textsc{sqa} is never better than \textsc{sa} beyond
noise, and is worse on $\rel{linear\_chain}$ ($0.52$ against $0.97$ at $n=30$).

At the machine's scale, the $\rel{reach}$ program of
Table~\ref{tab:programs} over two disjoint copies of the Z1 graph, with a single
$\rel{Source}$ in the first copy, compiles to $1{,}777$ spins, and its minimal model lights
$\rel{Reach}$ on the sourced copy only. The count is one spin fewer than
Figure~\ref{fig:embed}'s pair of independent programs, whose second copy carries its own
$\rel{Source}$ atom. Under the protocol above neither solver recovers it
on any read ($p_{\text{succ}}$ $0.00$ [$0.00, 0.00$] over five seeds). Every failing read
decodes to the same assignment, the minimal model plus all $48$ unreached $\rel{Reach}$
atoms, an unfounded set in which the mirrored rule lets the endpoints of every stored edge
support each other. That assignment sits $48\varepsilon$ above the ground state, escaping it
requires climbing a single-violation barrier of $W\Delta_{\min} = 2{,}632$, and it survives
ten thousand times the sweep budget unchanged. The instance is classically trivial: an
incremental Datalog engine, PyDBSP 2.1.0~\cite{pydbsp2024}, computes the same minimal model
in under $7$ milliseconds (median of five runs on an Apple M5 Max laptop CPU).

\begin{table}[t]
\centering
\caption{Solver recovery on $\rel{tc\_path}$ ($200$ reads of $1{,}000$ sweeps): atom spins,
minimal-model size, maximum logical degree, and the recovery rate $p_{\text{succ}}$, median
over five seeds \,{\scriptsize[min, max]}, under \textsc{sa} and \textsc{sqa}.}
\label{tab:recovery}
\begin{tabular}{rrrrcc}
\hline
$n$ & atoms & $|\Mod|$ & degree & $p_{\text{succ}}$ \textsc{sa} & $p_{\text{succ}}$ \textsc{sqa} \\
\hline
\input{figures/recovery_rows}\end{tabular}
\end{table}

\noindent\textbf{Precision on a physical device.} The solvers above compute with exact
coefficients, while a QPU rescales them into a fixed analog range where integrated control
errors perturb each programmed value by about $10^{-2}$ at one standard
deviation~\cite{dwaveICE}. Resolving two coefficients reliably needs several standard
deviations of separation, leaving roughly $4$ to $5$ dependable bits. The smallest separation that matters is one atom of cardinality, worth
$\varepsilon = 1$, set against the largest programmed coefficient magnitude $c_{\max}$,
which the sparse sweep families measure at $4W$ (each fact's $-5W$ field gains $+W$ from its atom's
role as a $1$-IMPLY body), a relative gap
$\varepsilon / c_{\max} \sim 1/(4\,n_{\mathrm{atoms}})$. For $\rel{tc\_path}$ a $4$-bit
device already fails to resolve that gap at the smallest instance ($5$ atoms) and a $5$-bit
device at $11$ atoms, with the other sparse families crossing within a few atoms of the same
points. The
non-linear family's $c_{\max}/W$ grows with $n$, so its budget is strictly tighter.
\section{Conclusion}

\looseness=-1 We present
$\mathcal{I} = \mathcal{C}_4 \circ \mathcal{C}_3 \circ \mathcal{C}_2 \circ
\mathcal{C}_1$,
our compiler from recursive Datalog programs to 2-local Ising models, whose ground state
projects onto the least Herbrand model of the source program. The per-stage results carry the minimal model
through (Lemmas~\ref{lem:bin}--\ref{lem:minones} and
Theorem~\ref{thm:ising}), and their composition is the ground-state
correspondence of Theorem~\ref{thm:correspondence}. Empirically, recovery under the simulated
solvers tracks foundedness rather than instance size, failing completely at chip scale while
a classical engine answers in milliseconds. The contribution is the verified
bridge itself, and its grounding runs classically. Any quantum advantage would have to come from
NP-hard extensions of the source language, from distributing one program across several
annealers, or from a richer use of the annealer's quantum dynamics than a ground-state read
exploits.

%% Define the bibliography file to be used
\bibliography{main}\label{end:refs}

\clearpage
\input{appendix_lean}

\end{document}

%% file: figures/recovery_rows.tex
$4$ & $19$ & $9$ & $9$ & 0.99 {\scriptsize[0.98,1.00]} & 0.98 {\scriptsize[0.98,0.99]} \\
$5$ & $29$ & $14$ & $11$ & 0.99 {\scriptsize[0.98,0.99]} & 0.98 {\scriptsize[0.98,1.00]} \\
$6$ & $41$ & $20$ & $13$ & 0.98 {\scriptsize[0.98,1.00]} & 0.96 {\scriptsize[0.94,0.98]} \\
$7$ & $55$ & $27$ & $15$ & 0.76 {\scriptsize[0.73,0.78]} & 0.60 {\scriptsize[0.58,0.64]} \\
$8$ & $71$ & $35$ & $17$ & 0.20 {\scriptsize[0.17,0.28]} & 0.12 {\scriptsize[0.08,0.15]} \\
$9$ & $89$ & $44$ & $19$ & 0.01 {\scriptsize[0.01,0.03]} & 0.01 {\scriptsize[0.01,0.01]} \\
$10$ & $109$ & $54$ & $21$ & 0.00 {\scriptsize[0.00,0.00]} & 0.00 {\scriptsize[0.00,0.01]} \\
\hline

%% file: appendix_lean.tex
\appendix

\section{The Verified Statements}\label{app:lean}

The statements below are reproduced \emph{verbatim} from the Lean~4 development, generated
mechanically from the sources, with proofs elided as \texttt{...}. The namespace of each
identifier is its stage file, while the semantic layer (\texttt{Datalog.lean}) and the
witness module (\texttt{Nonvacuity.lean}) export top-level names. The development is
\texttt{sorry}-free and builds with mathlib. For every statement,
\texttt{\#print axioms} reports
exactly the three standard foundational axioms \texttt{propext}, \texttt{Classical.choice}, and
\texttt{Quot.sound}. The stage statements evaluate \texttt{minimalModel} at the empty
interpretation because Stage~2 stores each EDB fact as a bodiless rule
(\texttt{edbRules}). The absorption lemma
\hyperref[app:datalog-minimalmodel-edb-absorb]{\texttt{minimalModel\_edb\_absorb}}, printed
below, proves that
convention lossless, and
\hyperref[app:c4-pipeline-correct-edb]{\texttt{C4.pipeline\_correct\_edb}} composes it with
the empty-EDB core, so the conclusion speaks about \texttt{minimalModel P D} for an explicit
EDB \texttt{D}, matching Theorem~\ref{thm:correspondence} as stated.
Lemma~\ref{lem:bin}'s \texttt{C1.correct} already carries a
general interpretation \texttt{E}, which the composition specializes. The hypothesis
\texttt{hW} of \hyperref[app:c4-encode]{\texttt{C4.encode}} is exactly the bound
$N < W$ of Theorem~\ref{thm:ising}, and
\hyperref[app:c4-pipeline-correct-relevant]{\texttt{C4.pipeline\_correct\_relevant}}
discharges it with the static weight
$W = |\Hsub| + 1$ of Section~\ref{sec:stage4}. The certified model indexes its spins by every
ground atom and every clause. The energy touches only the formula's clauses, so the remaining
ancilla coordinates carry no terms, and an atom outside the formula carries only its
$\varepsilon$ field, which pins it false. The compiled model is the restriction to the
coordinates that occur, and the restriction changes no term and no ground-state atom value.

The auxiliary-name supply of Lemma~1 and Theorem~2 is fuel-bounded, with one fresh name per
binarization split, and is constructively inhabited. The witness module
\texttt{Nonvacuity.lean}, part of the build, instantiates every result below on concrete
programs, so every hypothesis is discharged by an explicit term. One witness program's
binarization consumes a fresh name. Another has a nonempty minimal model, and its witness
proves the ground state sets both derived atoms true, so the correspondence is exercised in
both directions. The statements are conditional on a supplied ground state, and
\hyperref[app:nonvacuity-exists-groundstate]{\texttt{exists\_groundState}}, printed last,
shows one always exists as the argmin over the finite spin space.

\noindent\textbf{The definitions.} The statements quantify over the definitions
below, reproduced verbatim as well. \texttt{litSat} and \texttt{clauseSat} are literal- and
clause-level satisfaction in the usual sense. \texttt{minimalModel} is the intersection of
all models. \texttt{IsMinOnes} demands a strictly unique cardinality minimum. The $\pm 1$
spin semantics enters through \texttt{spinVal} inside \texttt{IsTwoLocal}.
\texttt{Signature}, \texttt{Atom}, and
\texttt{Term} formalize Section~\ref{sec:background}'s syntax over a fixed finite signature.
Every predicate has positive arity, so a ground atom always carries a constant and supplies
a grounding. The hypothesis \texttt{[Nonempty Grounding]} therefore appears only on the two
EDB-absorption statements, whose programs need not contain a ground atom.
Further,
\texttt{Atom.IsGround} holds when every argument is a constant, \texttt{Safe} is
Section~\ref{sec:framework}'s range restriction, and \texttt{groundTerm} sends a term to a
constant, reading
variables off the grounding. \texttt{excess} uses natural-number subtraction, which
truncates at zero, so facts and short rules contribute nothing.
\texttt{BinaryProgram} and \texttt{GroundBinaryProgram} are the rule shapes with at most two
body atoms, ungrounded and ground, each with a \texttt{toProgram} coercion back to
\texttt{Program}, and \texttt{edbFacts} and \texttt{idbPreds} collect a
ground binary program's bodiless heads and intensional predicates. The stage compilers
\texttt{C1.run} through \texttt{C4.run} are not reproduced, since the statements assert
their correctness.
\begin{lstlisting}
def GroundAtom := { a : Atom // a.IsGround }

abbrev HerbrandBase := GroundAtom

abbrev Interpretation := Finset HerbrandBase

structure Rule where
  head : Atom
  body : Finset Atom
  negBody : Finset Atom
  safe : Safe head body
  deriving DecidableEq

abbrev Program := Finset Rule

def Grounding := Signature.Var → Signature.Const

def applyAtom (θ : Grounding) (a : Atom) : GroundAtom :=
  ⟨⟨a.pred, fun i => Term.const (θ.groundTerm (a.terms i))⟩, fun _ => trivial⟩

def applyHead (θ : Grounding) (r : Rule) : GroundAtom := θ.applyAtom r.head

def Satisfies (I : Interpretation) (r : Rule) : Prop :=
  ∀ θ : Grounding, (∀ a ∈ r.body, θ.applyAtom a ∈ I) → θ.applyHead r ∈ I

def IsModel (P : Program) (E I : Interpretation) : Prop :=
  E ⊆ I ∧ ∀ r ∈ P, Satisfies I r

def minimalModel (P : Program) (E : Interpretation) : Interpretation :=
  Finset.univ.filter (fun g => ∀ I : Interpretation, IsModel P E I → g ∈ I)

def GroundAtom.toFact (d : GroundAtom) : Rule :=
  ⟨d.val, ∅, ∅, by ...⟩

def edbRules (D : Interpretation) : Program := D.image GroundAtom.toFact

structure AuxSupply (isAux : Signature.Pred → Bool) (K : ℕ) where
  atom : ℕ → List Signature.Var → Atom
  vars_eq : ∀ c, c < K → ∀ vs, (atom c vs).vars = vs.toFinset
  is_aux : ∀ c, c < K → ∀ vs, isAux (atom c vs).pred = true
  pred_inj : ∀ c c', c < K → c' < K → ∀ vs vs',
    (atom c vs).pred = (atom c' vs').pred → c = c'

def excess (Q : Program) : ℕ := ∑ ρ ∈ Q, (ρ.body.card - 2)

def restrictNonAux (isAux : Signature.Pred → Bool) (I : Interpretation) : Interpretation :=
  I.filter (fun g => isAux g.pred = false)

def relevantBase (P : GroundBinaryProgram) : Interpretation :=
  edbFacts P ∪ Finset.univ.filter (fun g : GroundAtom => g.pred ∈ idbPreds P)

abbrev Literal := GroundAtom × Bool

abbrev Clause := Finset Literal

abbrev CNF := Finset Clause

def Sat (I : Interpretation) (φ : CNF) : Prop := ∀ c ∈ φ, clauseSat I c

def IsMinOnes (φ : CNF) (I : Interpretation) : Prop :=
  Sat I φ ∧ ∀ J : Interpretation, Sat J φ → J ≠ I → I.card < J.card

abbrev Spin := Bool

def spinVal (b : Spin) : Int := if b then 1 else -1

structure Model (ι : Type) where
  energy : (ι → Spin) → Int

def IsGroundState {ι : Type} (m : Model ι) (s : ι → Spin) : Prop :=
  ∀ t : ι → Spin, m.energy s ≤ m.energy t

def IsTwoLocal (f : (ι → Spin) → ℚ) : Prop :=
  ∃ (c₀ : ℚ) (h : ι → ℚ) (J : ι → ι → ℚ), ∀ σ : ι → Spin,
    f σ = c₀ + (∑ i, h i * (spinVal (σ i) : ℚ))
      + (∑ i, ∑ j, J i j * (spinVal (σ i) : ℚ) * (spinVal (σ j) : ℚ))
\end{lstlisting}

\phantomsection\label{app:c1-correct}%
\noindent\textbf{Lemma~1 (Binarization) --- \texttt{C1.correct}.}
\begin{lstlisting}
variable {isAux : Signature.Pred → Bool}
variable {K : ℕ} (supply : AuxSupply isAux K)

theorem correct (E : Interpretation) (P : Program)
    (hP : ∀ ρ ∈ P, isAux ρ.head.pred = false ∧ ∀ a ∈ ρ.body, isAux a.pred = false)
    (hE : ∀ g ∈ E, isAux g.pred = false) (hK : excess P ≤ K) :
    restrictNonAux isAux (minimalModel (BinaryProgram.toProgram (run supply P hP hK)) E)
      = minimalModel P E := by ...
\end{lstlisting}

\phantomsection\label{app:c2-correct}%
\noindent\textbf{Lemma~2 (Grounding) --- \texttt{C2.correct}.}
\begin{lstlisting}
theorem correct (BP : BinaryProgram) :
    minimalModel (toProgram (run BP)) ∅ = minimalModel (BinaryProgram.toProgram BP) ∅ := ...
\end{lstlisting}

\phantomsection\label{app:c3-sat}%
\noindent\textbf{Lemma~3 (Min-Ones SAT), the satisfaction half --- \texttt{C3.sat}.}
\begin{lstlisting}
theorem sat (P : GroundBinaryProgram) (I : Interpretation) :
    Sat I (run P) ↔ IsModel (toProgram P) ∅ I := by ...
\end{lstlisting}

\phantomsection\label{app:c3-minones}%
\noindent\textbf{Lemma~3 (Min-Ones SAT), the minimality half --- \texttt{C3.minOnes}.}
\begin{lstlisting}
theorem minOnes (P : GroundBinaryProgram) :
    IsMinOnes (run P) (minimalModel (toProgram P) ∅) := by ...
\end{lstlisting}

\phantomsection\label{app:c4-encode}%
\noindent\textbf{Theorem~1 ($\varepsilon$-minimality) --- \texttt{C4.encode}.}
\begin{lstlisting}
theorem encode (P : GroundBinaryProgram) (N : ℕ) (W : Int)
    (hN : (minimalModel (toProgram P) ∅).card ≤ N) (hW : (N : Int) < W)
    (σ : GroundAtom ⊕ Clause → Spin) (hs : IsGroundState (run (C3.run P) W) σ) :
    ∀ g : GroundAtom, σ (Sum.inl g) = true ↔ g ∈ minimalModel (toProgram P) ∅ := by ...
\end{lstlisting}

\phantomsection\label{app:datalog-minimalmodel-edb-absorb}%
\noindent\textbf{EDB absorption (the facts-as-rules convention is lossless) --- \texttt{minimalModel\_edb\_absorb}.}
\begin{lstlisting}
theorem minimalModel_edb_absorb [Nonempty Grounding]
    (P : Program) (D : Interpretation) :
    minimalModel (P ∪ edbRules D) ∅ = minimalModel P D := by ...
\end{lstlisting}

\phantomsection\label{app:c4-pipeline-correct-edb}%
\noindent\textbf{Theorem~2 (Correspondence) --- \texttt{C4.pipeline\_correct\_edb}.}
\begin{lstlisting}
theorem pipeline_correct_edb {isAux : Signature.Pred → Bool} {K : ℕ}
    [Nonempty Grounding]
    (supply : AuxSupply isAux K) (P : Program) (D : Interpretation)
    (hP : ∀ ρ ∈ P ∪ edbRules D,
      isAux ρ.head.pred = false ∧ ∀ a ∈ ρ.body, isAux a.pred = false)
    (hK : C1.excess (P ∪ edbRules D) ≤ K)
    (N : ℕ) (W : Int)
    (hN : (minimalModel (toProgram (C2.run (C1.run supply (P ∪ edbRules D) hP hK))) ∅).card ≤ N)
    (hW : (N : Int) < W)
    (σ : GroundAtom ⊕ Clause → Spin)
    (hs : IsGroundState (run (C3.run (C2.run (C1.run supply (P ∪ edbRules D) hP hK))) W) σ)
    (g : GroundAtom) (hg : isAux g.pred = false) :
    σ (Sum.inl g) = true ↔ g ∈ minimalModel P D := by ...
\end{lstlisting}

\phantomsection\label{app:c4-pipeline-correct-relevant}%
\noindent\textbf{Theorem~2 at the certified relevant-base weight $W = |\Hsub| + 1$ --- \texttt{C4.pipeline\_correct\_relevant}.}
\begin{lstlisting}
theorem pipeline_correct_relevant {isAux : Signature.Pred → Bool} {K : ℕ}
    (supply : AuxSupply isAux K) (P : Program)
    (hP : ∀ ρ ∈ P, isAux ρ.head.pred = false ∧ ∀ a ∈ ρ.body, isAux a.pred = false)
    (hK : C1.excess P ≤ K)
    (σ : GroundAtom ⊕ Clause → Spin)
    (hs : IsGroundState (run (C3.run (C2.run (C1.run supply P hP hK)))
            (((C2.relevantBase (C2.run (C1.run supply P hP hK))).card : Int) + 1)) σ)
    (g : GroundAtom) (hg : isAux g.pred = false) :
    σ (Sum.inl g) = true ↔ g ∈ minimalModel P (∅ : Interpretation) := ...
\end{lstlisting}

\noindent The $2$-locality of the Hamiltonian is verified through an
explicit witness: the development exhibits local fields $\eta$ and a coupling matrix $J$ and
proves that the Stage-4 energy equals
$c_0 + \sum_i \eta_i\, \sigma_i + \sum_{i,j} J_{ij}\, \sigma_i \sigma_j$, the symmetric
full-matrix form of the Hamiltonian of Section~\ref{sec:ising} (with
$\sum_{i<j}(J_{ij}+J_{ji})$ folded into the upper triangle). The extracted coefficients are
rational, and the additive constant $c_0$ is independent of the spins, so it leaves the
ground state unchanged.

\phantomsection\label{app:c4-twolocal}%
\noindent\textbf{$2$-locality of $H$ --- \texttt{C4.twolocal}.}
\begin{lstlisting}
theorem twolocal (P : GroundBinaryProgram) (W : Int) :
    IsTwoLocal (fun σ => ((run (C3.run P) W).energy σ : ℚ)) := ...
\end{lstlisting}

\phantomsection\label{app:nonvacuity-exists-groundstate}%
\noindent\textbf{Ground states exist (finite argmin) --- \texttt{exists\_groundState}.}
\begin{lstlisting}
theorem exists_groundState {ι : Type} [Fintype ι] [DecidableEq ι] (m : C4.Model ι) :
    ∃ σ, IsGroundState m σ := by ...
\end{lstlisting}

\section{Solver Configuration}\label{app:protocol}

The recovery experiments pin \texttt{dwave-samplers} 1.7.0, \texttt{minorminer} 0.2.21,
\texttt{dimod} 0.12.21, \texttt{dwave-networkx} 0.8.19, and \texttt{pydbsp} 2.1.0.
\textsc{sa} (\texttt{SimulatedAnnealingSampler}) runs at library defaults: a geometric
$\beta$ schedule over a range the library derives from the instance's couplings and fields,
one sweep per $\beta$ value, Metropolis acceptance, random initial states, and a fixed
spin-update order. \textsc{sqa} (\texttt{PathIntegralAnnealingSampler}) runs at library
defaults as well: transverse field $\Gamma = 1$, the default driver and problem field
schedules, and one qubit per chain and per update.